\newtheorem{lemma}{lemma}
\newtheorem{definition}{Definition}
\newtheorem{proof}{Proof}
\title{A Direct Approach for Solving Cloud Computing Task Assignment with Soft Deadlines
}
\author{
  Guang Fang \\
  \texttt{fguang@whu.edu.cn}  \\
  \And
  Yuxiang Zhao \\
  \texttt{yuxiangzhao@whu.edu.cn}
}
\begin{document}
\maketitle

\begin{abstract}
Job scheduling in cloud computing environments is a critical yet complex problem. Cloud computing user job requirements are highly dynamic and uncertain, while cloud computing resources are heterogeneous and constrained. This paper studies the online resource allocation problem for elastic computing jobs with soft deadlines in cloud computing environments. The main contributions include: 1) Integer linear programming modeling is used to design an auction time scheduling framework with three key modules - resource allocation, evaluation, and operation, which can dynamically allocate resources in closed loops. 2) Methods such as time-based single resource utilization evaluation and weighted average evaluation are proposed to evaluate resource usage efficiency. 3) Soft acceptance protocols are introduced to achieve elastic online resource allocation. 4) The time complexity of the proposed algorithms is analyzed and proven to be polynomial time, demonstrating efficiency. 5) Modular design makes the framework extensible. This paper provides a structured cloud computing auction framework as a reference for building practical cloud resource management systems. Future work may explore more complex models of random arrival and multi-dimensional resource constraints, evaluate algorithm performance on real cloud workloads, and further enhance system robustness, efficiency and fairness.
\end{abstract}

\keywords{Cloud computing \and scheduling \and auction mechanism\and online algorithm}

\section{Introduction}

With the rapid development of the mobile Internet, smart devices have become an indispensable part of people’s life. Increasingly complex applications such as mobile payment, smart healthcare, mobile games, and virtual reality (VR) put higher requirements on the resource capacity of smart devices. Cloud computing is a cost-effective model that provides abundant applications and services while making information technology (IT) management more accessible and responding to users’ demands faster \cite{Velte2009CloudCA}. The services (computing, communication, storage, and all necessary services) are delivered and implemented in a simplified way: on-demand, regardless of the users’ location and the type of smart devices. In the past decade, two types of cloud platforms blossomed on the Internet, including (i) large scale Internet data centers, exemplified by Amazon EC2 \cite{Amazon2023ec2}, Microsoft Azure and Linode \cite{azure2023,linode2023}, which organize a shared resource pool for serving their users; and (ii) co-location data centers, often found in metropolitan areas, where smaller clouds from different users are physically co-located, jointly managed and serviced by the co-location \cite{Zhang2015Atruthful}.

Virtualization technologies help cloud providers pack their resources into different types of virtual machine (VM), for allocation to cloud users. For example, Amazon EC2 \cite{Amazon2023ec2} currently offers 70 types of different VM types in 22 product categories. Each type of VM has its focus and forte, and a large computing job often requires cooperation among multiple VM instances. For example, social games \cite{Gonalves2023SocialGA} and enterprise applications \cite{Hajjat2010CloudwardBP} are often composed of a front-end web server tier, a load balancing tier and a back-end data storage tier, each suited for execution on a VM that is abundant in a particular type of resource: bandwidth, CPU, or storage. 

Scheduling\cite{Kumar2019ACS} is the process of determining which activities to execute taking into account service quality parameter constraints. Resource scheduling strategies are considered an important role in cloud computing environments. Cloud computing users may need a large number of virtual resources, so it is not possible to manually allocate jobs and resources for all users. Resource allocation studies how to rationally and effectively allocate resources in edge computing systems to accomplish offloading and task processing. In general, the main resources involved in current resource allocation research are computing resources, communication resources, and storage resources. Computing resources usually refer to CPU cycles and resource blocks (VMs/containers). Resource allocation is described in computing as the methods of allocating available resources to requested applications. Virtual machine placement and task scheduling are two key resource allocation problems that can be solved by studying different demands and criteria. Through analysis and summarization of existing studies\cite{Kumar2019ACS,SantoyoGonzlez2020NetworkAwarePO}, virtual machine placement is a proactive way of resource provisioning from the resource provider's perspective. Task scheduling is a passive way of resource provisioning from the user's perspective. The task assignment problem in cloud computing refers to the optimal placement and matching schemes between user tasks and cloud resources.

Cloud computing jobs can be categorized into two types, depending on whether their computational demands are elastic or not. Large web servers and other cloud jobs utilize cloud services as utilities and require rented virtual machines to always be active, and may dynamically scale up and down. These jobs are similar to power consumers on the grid that require an uninterrupted power supply. Other jobs like big data analytics and Google crawl data processing are often batch in nature. They need to complete specific compute jobs without requiring virtual machine services to always be online, and can tolerate some degree of latency as jobs complete. These users are akin to energy consumers on the grid that need to acquire a fixed amount of energy within flexible time windows to power given jobs.

Existing cloud market mechanisms, especially auction-based ones, have implicitly targeted the first type of inelastic cloud jobs. In such one-round \cite{Zhang2014DynamicRP} and online \cite{Shi2014AnOA} cloud resource auctions, once a bid is accepted, the service time window of the corresponding virtual machine is fixed, i.e., the start and end times specified in that round \cite{Shi2014AnOA} or bid \cite{Zhang2015OnlineAI}. These auction algorithms do not need to consider scheduling of accepted jobs. In stark contrast, for the second type of elastic jobs, a well-designed market mechanism needs to pay close attention not only to whether to accept bids, but also when to schedule their execution according to deadline information. For example, consider a user who needs to periodically train machine learning models. He submits a training job that can finish in 2 hours if assigned to his specified GPU server. However, the user's model does not need real-time updates. He can accept getting the training results within 2 days. This leaves ample time room for job scheduling. A well-designed scheduling algorithm should wisely leverage such time tolerance to maximize resource utilization and social efficiency - for example, schedule the job within the time window tolerable to the task, choosing relatively low demand periods.

Zhou et al. \cite{Zhou2017AnEC} builds upon online auctions that explicitly handle jobs with firm deadlines, and further allows cloud users to express soft deadlines described by preferred job completion times and penalty functions that encode the penalty amounts associated with different degrees of deadline violations. Compared to simple market mechanisms like fixed pricing, carefully designed auctions can automatically discover prices and timely adjust prices according to fluctuations in supply and demand, allocating cloud resources to the most valuable jobs to maximize the overall "happiness" of the users.

This paper builds upon this work by further expressing the preferences of both the service provider and users, and introduces the concept of soft acceptance. Compared to fixed online auction-based market mechanisms, soft acceptance stimulates users' true valuations more, and can timely adjust prices according to supply-demand relationships when resources are tight, allocating cloud resources to the most valuable jobs to maximize overall welfare.

We aim at the following goals in our cloud auction design. First, we require the cloud auction to be computationally efficient and execute in polynomial time. Second, the auction should be truthful, so that bidding true valuations of jobs is the dominant strategy for cloud users. Third, the auction should maximize the social welfare of everyone (including both the cloud provider and cloud users) in the system. Such a cloud auction design faces many challenges. First, truthfulness is a quite strong property that can only be achieved by a carefully prepared pair of virtual machine allocation and payment algorithms working together. Moreover, even if we can assume cloud users are altruistic and bid truthfully for free, the winner determination problem for social welfare maximization is an integer linear program (ILP) that is difficult to solve and NP-hard. The work in \cite{Zhou2017AnEC} has completed relevant proofs and online algorithm designs that can decide immediately upon arrival of each bid. By making jobs divisible, we propose an additional condition that the service provider can decide to accept first, and then pause or terminate the service when resources are not enough, called soft acceptance. This allows satisfying all user requests when resources are abundant, and satisfying high-value tasks when resources are insufficient, achieving social welfare maximization.

We first consider a basic setting where resources in the cloud are free within known capacity limits, and soft deadlines can be represented by enumerating some hard deadline options and their corresponding bid prices. We first present the natural ILP formulation of the social welfare maximization problem. Although the ILP is of polynomial size, it involves both conventional constraints (capacity limits) and unconventional constraints (job deadlines). With the soft acceptance condition, we consider this by pricing each user's per-time-slot value based on their different bid prices for different soft deadlines.

For this, we design an online algorithm that, under the added condition of soft acceptance, allocates tasks in the current time slot to all users if resource constraints are satisfied, and otherwise evaluates the per-time-slot value based on soft deadlines, pushing the lowest value task to the next time slot, guaranteeing the maximum value property of all current tasks, until all tasks finish or are paused-rejected. We theoretically analyze its competitive ratio and prove resource upper bounds on when the online algorithm can achieve the optimal.

We continue to promote cloud auction design by solving practical problems. First, we model the resource allocation costs in data centers using a convex cost function that characterizes server costs through dynamic voltage and frequency scaling \cite{wiki2023dynamic}. Second, we consider a general form of soft deadlines specified by (i) a preferred deadline and (ii) a non-decreasing penalty function for deadline violations. The new social welfare maximization problem is an integer convex program. (iii) We design a soft accept protocol where tasks are terminated, intermediate results are returned, and bids are refunded when tasks cannot proceed. Our published pricing auction framework builds upon the first two to consider soft accepts as a general scenario setup.

In the following sections, we first review related prior work in Sec. \ref{related_word}. Sec. \ref{system_modeling} then describes our system model and problem formulation. Our proposed framework for cloud auctions is presented in Sec. \ref{framework}, with accompanying analysis in Sec. \ref{analysis}. Simulation experiments evaluating the framework are detailed in Sec. \ref{experience}. Finally, Sec. \ref{conclusion} concludes the paper and summarizes key contributions.

\section{RELATED WORKS}\label{related_word}

The design of market mechanisms for cloud computing, especially auction mechanisms for cloud resource trading, has attracted great interest from the research community and given rise to a plethora of virtual machine auctions in the past few years\cite{Zhang2014DynamicRP,Shi2014AnOA,Zhang2015OnlineAI,Zhou2017AnEC,Wang2012WhenCM,Tan2020OnlineCA,Chen2021EdgeDRAO}.

The earliest virtual machine auctions were simple because they were one-round auctions and assumed clouds provide a single type of virtual machine, or virtual machine configurations that linearly scale \cite{Wang2012WhenCM}. They also assumed scenarios of static VM configuration where the number and types of VMs to sell are predetermined before the auction starts \cite{Zaman2011CombinatorialAD}.

Dynamic VM configuration, where the cloud provider decides which and how many VMs to assemble based on demand learned from user bids during the auction, has been studied in the past two years \cite{Zhang2014DynamicRP}–\cite{Zhang2015OnlineAI}. Zhang et al. designed a randomized auction for dynamic resource provisioning in cloud computing based on convex decomposition techniques, which is truthful and guarantees a small approximation ratio on social welfare \cite{Zhang2014DynamicRP}. Shi et al. further studied dynamic resource provisioning with budget-constrained cloud users, and designed online auctions where decisions are coupled in time due to fixed user budgets \cite{Shi2014AnOA}.

Online cloud auctions appear later than their one-round counterparts. Zhang et al. is among the first to study online cloud auction design, but they assume all VMs are of a uniform type \cite{Zhang2013AFF}. 
The work of Shi et al. \cite{Shi2014AnOA} designs online auctions, but does not consider the temporal correlation in decision making due to jobs spanning multiple time slots. 

A recent work of Tan et al. \cite{Tan2020OnlineCA} study the online combinatorial auction resource allocation problem with supply costs and capacity constraints. In the model, the supplier charges customers buying a bundle of resources, and the supplier's supply cost increases relative to the total resources allocated. We focus on maximizing social welfare, adopt the competitive analysis framework, and provide an optimal online mechanism through posted pricing.

Chen et al. \cite{Chen2021EdgeDRAO} designs procurement auctions to address operator-tenant incentive issues in distributed cloud demand response, enabling operators to optimize workload allocation and resource scheduling, but may incur switching costs when opening or closing clouds for grid stability.

Fan et al. \cite{Fan2017DeadlineAwareTS} proposed a deadline-aware task allocation mechanism, and transformed the task scheduling problem into a multi-dimensional 0-1 knapsack problem. They adopted an efficient task allocation algorithm based on ant colony optimization, which improves the total profit of the system while meeting task deadlines and resource constraints.

The work in \cite{Zhou2017AnEC} considers cloud auctions with job elasticity and job execution deadlines, and proposes compact index encoding optimization techniques using the primal-dual method \cite{Buchbinder2009TheDO} to effectively handle the new job deadline constraints in cloud social welfare maximization. It designs a series of online algorithms that operate on the dual problem. Designing the cost function for the dual problem also requires some trial and error. Even with abundant resources, poor cost function design can lead to loss of users.

Compared to existing cloud auction literature, our work designs a locally-aware solution approach for cloud auctions with elastic jobs and multiple soft deadlines, effectively handling the original problem without needing to design pricing functions induced by the dual problem. Building upon an offline solvable basis, we further enable online applicability by introducing soft acceptance conditions and adopting a online model.

\section{Systems and Modeling}\label{system_modeling}

In system modeling work, we consider a cloud data center hosting $M$ types of resource pools, including CPU, RAM, and disk storage that can be dynamically assembled into different types of virtual machines (VMs). Let $[\text{X}]$ denote the integer set ${1,2,\ldots, \text{X}}$. There are a total of $c_m$ resource units of type $m$ in the cloud. The cloud service provider serves as an auctioneer, renting out VMs to cloud users through auctions. Users' bids will arrive randomly at time-slots $1,2, \ldots, S$ over a long timespan.As shown in the figure \ref{schematic}, this is a problem of cloud computing resource auction and task scheduling. Note that multiple bids can arrive simultaneously and will be randomly ordered. There is a set of users $U$ participating in the auction, where each user $i$ requests multiple types of VMs and specifies in their bid: (i) $r_i^m$, the total amount of type-$m$ resources, and (ii) $s_i$, the specified time slots length for the VM to complete the job. The job execution is discrete and does not need to be continuous. User $i$'s job can execute at any time slots as long as the total execution time is within the deadline and satisfies $t_i$.

\begin{figure}
	\centering
	\includegraphics[width=0.8\linewidth]{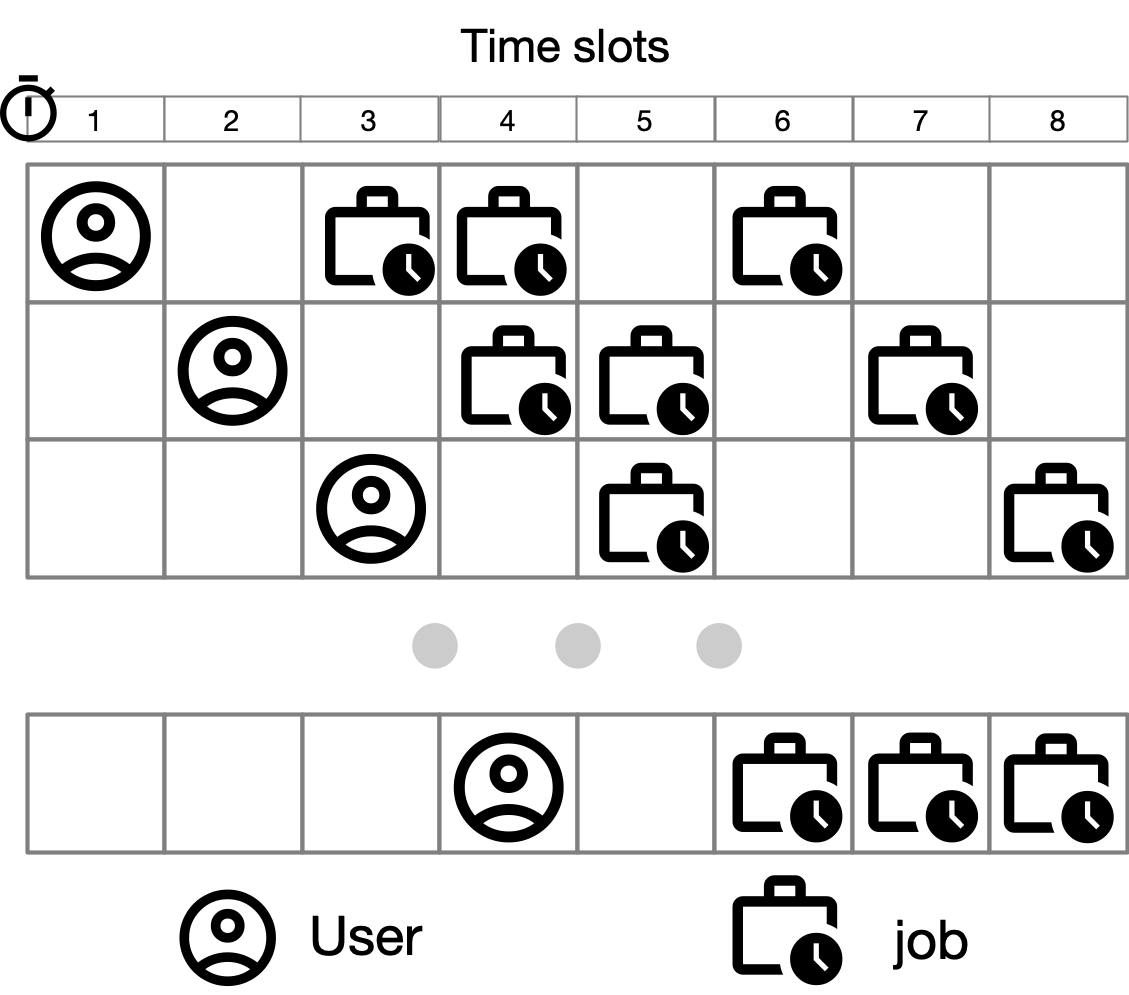}
	\caption{Schematic diagram of cloud computing resource task schedulingn}
	\label{schematic}
\end{figure}

\subsection{Bids with Multiple Deadlines}
First we consider a basic scenario where each user submits $B$ optional bids to express separate deadline options. User $i$'s bids consist of the required resource type list $r_i^m, \forall m$; the requested slot amount $s_i$; and the final deadline $e_{ij}, \forall j$, each with a corresponding bid price $p_{ij}$. We use $A_i$ to denote user $i$'s auction language submitted at time $a_i$:
$$
A_i=\left\{a_i,\left\{r_i^m\right\}_{m \in[M]}, s_i,\left\{e_{i j}, p_{i j}\right\}_{j \in B}\right\} .
$$
 
We adopt the XOR bidding rule, assuming each user can win at most one bid out of the $B$ optional bids \cite{Zhang2014DynamicRP}. Upon arrival of each bid, the cloud provider will consider the urgency of the user's demand for task scheduling, and try to complete the task according to the user's wishes as much as possible. After completing the task, it will inform the user of the completion status and actual winning price. If user $i$'s $j_{th}$ bid wins, then the binary $x_{ij}$ equals 1, otherwise 0. Let another binary variable $y_i(t)$ encode user $i$'s schedule: if user $i$'s job is scheduled to run at time $t$, then $y_i(t)=1$, otherwise 0.

In practice, users are assumed to be selfish, naturally aiming to maximize their own utility; they may misreport their true valuations in order to gain higher utility. In contrast, the cloud provider pursues the highest possible social welfare, making everyone in the cloud system "happy". Therefore, for the cloud provider, eliciting the true bids is very important. Thus, the cloud provider's consideration of the urgency of users' demands and trying to complete tasks according to their wishes can effectively motivate users to reveal their true valuations to some extent. This helps align user and system objectives for maximizing social welfare.

\begin{table}[!t]
	\caption{Notations}
	\centering
	\begin{tabular}{c l}
		\hline
		\hline
		Inputs & Descriptions\\
		\hline
		[X] & integer set {1, 2, ... ,X}\\
		U & set of Users\\
		B & set of bids per user \\
		S & set of slots\\
		$a_i$ & user $i$'s arrival time \\
		$r_i^m$ & demand of type-$m$ resource by user $i$ \\
		$s_i$ & \# slots requested by user $i$ \\
		$e_{i j}\left(e_i\right)$ & deadline of user $i$'s $j$th (user $i$'s) bid \\
		$p_{i j}\left(b_i\right)$ & bidding price of user $i$'s $j$th (user $i$'s) bid\\
		$c_m$ & capacity of type-$m$ resource  \\
		\hline
		Decisions & Descriptions \\
		\hline
		$x_{i j}\left(x_i\right)$ & Whether user $i$'s $j$th bid wins (1) or not (0) \\
		$y_{it}$ & whether to allocate user $i$ s job in slot $\mathrm{t}$ \\

		\hline
		\hline
	\end{tabular}
\end{table}

\subsection{Problem Formulation}

The social welfare maximization problem with alternative deadlines can be formulated into the following ILP:

\begin{subequations}
	\begin{align}
		\operatorname{maximize} & \sum_{i \in U} \sum_{j \in B} p_{i j} x_{i j} \label{main_eq} \\
		\text { subject to: } \quad & \sum_{j \in B} x_{i j} \leq 1, \forall i \in U, \label{limit_job}\\
		& y_{it} t \leq \sum_{j \in B } e_{i j} x_{i j}, \forall t \in S , \forall i \in U : a_i \leq t,\label{limit_deadline} \\
		& \sum_{j \in B} s_i x_{i j} \leq \sum_{t \in T: a_i \leq t} y_{it}, \forall i \in U, \label{limit_win_bid}\\
		& \sum_{i \in U: a_i \leq t} r_i^m y_i(t) \leq c_m, \forall m \in M, \forall t \in S, \label{limit_resource}\\
		& x_{i j}, y_{it} \in\{0,1\}, \forall i \in U, \forall t \in S, \forall j \in J .
	\end{align}
\end{subequations}

We can see that this is a multi-dimensional 0-1 knapsack problem, and is NP-hard. The multiple deadlines are modelled with the XOR bidding rule by (\ref{limit_job}), constraint (\ref{limit_deadline}) ensures that a job is scheduled to run between its arrival time and deadline. Constraint (\ref{limit_win_bid}) guarantees that the number of allocated slots is sufficient for serving a successful bid, and the capacity limit of each type of resource is expressed in constraint (\ref{limit_resource}).

Even in an offline environment, without constraints (\ref{limit_deadline}) and (\ref{limit_win_bid}), the ILP (\ref{main_eq}) is still an NP-hard combinatorial optimization problem, equivalent to the classical knapsack problem. When we deal with job deadlines and make online decisions, the challenges are further escalated. To tackle these challenges, we first design a framework to handle modeling constraints of multiple soft deadlines, more specifically, we transform the original ILP problem into a locally-aware cumulative maximization problem.

\section{Framework Introduction}\label{framework}

For the aforementioned ILP optimization problem, this is an NP-hard problem. Therefore, we will redesign it and propose a computational framework, which mainly consists of three modules: task allocation, evaluation, and action. 1).Allocation phase: Allocate the current unfinished tasks to all users at the current time. 2).Evaluation: After allocation, evaluate whether the allocation in each time slot meets the requirements based on some metric (e.g. resource usage). 3).Action: Obtain the evaluation results. For tasks that do not meet the time slot requirements, find the next available time slot for each user. If found, adjust the allocation. If not, cancel the task until the end of the time horizon.
As shown in the figure, it demonstrates one kind of dynamic task allocation method.

\begin{figure}[!h]
	\centering
	\includegraphics[width=3.6in]{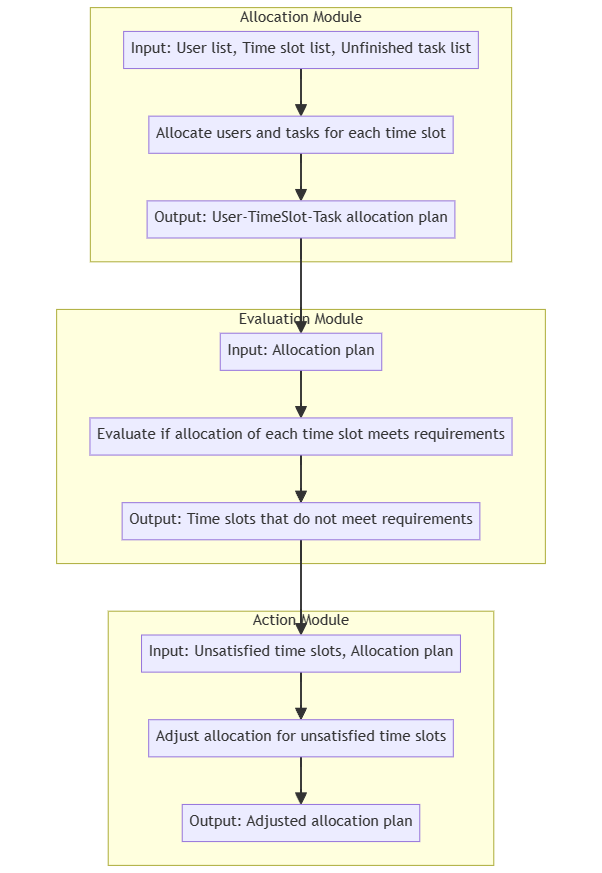}
	\caption{Auction Task Allocation Framework}
	\label{framwork}
\end{figure}

\begin{algorithm}
	\caption{Auction Time Scheduling Framework}\label{alg:alg1}
	\begin{algorithmic}
		
		\FOR{$t=1$ to $S$}
		
		\STATE {\bfseries Allocation Module:}
		
		\STATE Input: User list, Time slot list, Unfinished task list
		
		\STATE Allocate users and tasks for each time slot
		
		\STATE Output: User-TimeSlot-Task assignment
		
		\STATE {\bfseries Evaluation Module:}
		
		\STATE Input: Assignment
		
		\STATE Evaluate if allocation in each time slot meets requirements
		
		\STATE Output: Time slots that do not meet requirements
		
		\STATE {\bfseries Action Module:}
		
		\STATE Input: Unmet time slots, Assignment
		
		\STATE Adjust allocation for unmet time slots
		
		\STATE Output: Adjusted assignment
		
		\ENDFOR
		
	\end{algorithmic}
\end{algorithm}

In order to optimize the time scheduling of the auction platform, we designed an auction time scheduling framework based on cyclic iterations. This framework contains three key modules: Allocation Module, Evaluation Module, and Action Module. In each time cycle, the Allocation Module generates an initial mapping of time slots to users and tasks based on the user list, time slot list, and unfinished task list. Then the Evaluation Module evaluates the allocation scheme for each time slot based on some optimization objectives and constraints. For those time slots that do not meet the requirements, the Action Module tries to improve them by adjusting their allocation schemes. After this round of iteration, an adjusted new allocation scheme is generated as the new input for the next cycle. Through such cyclic scheduling iterations, each module works collaboratively to gradually optimize and converge to a relatively optimal allocation of time slots to users and tasks. Compared to one-shot static allocation, this framework achieves dynamic optimization by introducing closed-loop evaluation and adjustment. Meanwhile, compared to separate evaluation or adjustment, it comprehensively considers the interactions between allocation, evaluation and adjustment. This closed-loop design based on control theory provides an effective and general solution for auction time scheduling.

\begin{definition}
	\textbf{User's Unit Time Value} The user's unit time value (Unit Time Value) refers to the user's emphasis on a task i at time t. At current time t, within the j-th soft deadline ($e_{ij}$), it can be represented by the ratio of the user's j-th bid ($p_{ij}$) to the time difference between the start time ($a_i$) and the j-th soft deadline ($e_{ij} - a_i$), i.e.:
	\begin{align}
		u_{it} = \frac{p_{ij}}{e_{ij} - a_i}
	\end{align}
\end{definition}

This definition reflects the change in the price a user is willing to pay to complete a task as time goes by. The shorter the time difference, the higher the user's unit time value, indicating the user's increasing urgency to complete the task. The calculation method is the user's bid divided by the corresponding time interval. This concept can help understand and predict the user's time sensitivity and task urgency.

\begin{definition}
	\textbf{Unit Time Resource Value Density} The unit time resource value density is an indicator that reflects the user's time sensitivity and resource sensitivity in obtaining the required resources to complete a task or output.
	
	\begin{align}
		v_{it}^m = \frac{u_{it}}{r_i^m}
	\end{align}
	
\end{definition}

Where $v_{it}^m$ denotes unit time resource value density of user i for resource type m at time t.$u_{it}$ denotes user i's unit time value for the task at time t. $r_i^m$: Amount of resource type m required by user i to complete the task.

This definition reflects how users value obtaining a unit amount of resources required to complete a task or output over time. It takes into account the effects of both time factors and resource factors on user decisions. The higher the unit time resource value density, the higher the user's time sensitivity and resource sensitivity towards the task or output, and the higher the urgency to complete the task or obtain the output.
This indicator can be used to analyze the user's time elasticity and resource elasticity, assess the urgency of tasks in different periods, and provide a reference basis for relevant decisions such as resource scheduling and pricing.
The unit time resource value density simultaneously considers time factors and resource factors. A higher value means the user has higher urgency and willingness to pay to obtain resources to complete tasks or outputs, reflecting higher time and resource sensitivity. This metric is useful for understanding user behavior and making scheduling and pricing decisions accordingly.

\subsection{Offline Design}

Based on the framework in Algorithm \ref{alg:alg1} and the definitions, design an offline algorithm where when a task arrives at time $a_i$, allocate the task based on its required length $s_i$ to every time slot from $a_i$ to $a_i + s_i$. Check if at current time $t_i$, the resource usage of all users exceeds the limit $c_m$. If exceeded, evaluate each resource and based on the defined per time unit value of the user and per time unit resource value density, evaluate the user with the lowest per time unit resource value density. Cancel the current allocation of the user with the lowest score and postpone it to the next available time slot. If no available time slot, cancel the task, until the task is finished. When the task is completed, give the final price. This completes the algorithm. The pseudocode of the algorithm is shown in Algorithm \ref{resource_allocate_alg}.

\begin{algorithm}
	\caption{Resource Allocation Adjustment Algorithm}\label{resource_allocate_alg}
	
	\textbf{Input}: Time slot list, Resource capacity, Original user-timeslot-resource allocation
	
	\textbf{Output}: Adjusted user-timeslot-resource allocation
	
	\begin{algorithmic}[1]
		
		\FOR{t = 1 to S for each time slot}
		
		\STATE Allocate unfinished tasks to current time slot
		
		\FOR{m = 1 to M for each resource type}
		
		\STATE Calculate current resource usage $R_i^m$
		
		\IF{$R_i^m > r_i^m$}
		
		\STATE Evaluate current users, find all users exceeding capacity $u_i$
		
		\FOR{each exceeding user $u_i$}
		
		\STATE Get this user's next available time slots $s_r$ after current
		
		\IF{available time slots exist}
		
		\STATE Postpone this user's task to the first slot in $s_r$
		
		\ELSE
		
		\STATE Reject user, cancel this user's allocation in all previous slots
		
		\ENDIF
		
		\STATE Recalculate resource usage $R_i^m$
		
		\ENDFOR
		
		\ENDIF
		
		\ENDFOR
		
		\ENDFOR
		
		\RETURN new schedule table
		
	\end{algorithmic}
	
\end{algorithm}

The proposed resource allocation adjustment algorithm dynamically adjusts user-timeslot-resource allocations to resolve resource overloading issues. It iterates through each time slot (line 1-2), allocating unfinished tasks to the current slot (line 2). For each resource type, it calculates current usage (line 5) and compares it to capacity (line 6).
If usage exceeds capacity, overloaded users are identified (line 7-9). For each overloaded user, the algorithm checks for available future time slots (line 10-11). If available, the user's task is postponed to the earliest future slot (line 12-13). Otherwise, the user is rejected and their allocation canceled (line 14-16).
Resource usage is then recalculated (line 18) before returning the adjusted allocation table (line 20).
The key steps are identifying overloaded users (line 7-9) and rescheduling or rejecting them (line 10-17) to ensure capacities are not exceeded. By limiting per-user resource usage and postponing or rejecting requests when resources are insufficient, the algorithm dynamically adjusts allocations to resolve overloading.

The time complexity analysis of the algorithm \ref{resource_allocate_alg}. This algorithm has two main loops: the outer loop iterates through each time slot, where the number of time slots is S, and the inner loop iterates through each resource type, where the number of resource types is M. In the inner loop, the following operations are performed: calculate resource usage, time complexity O(1); if resource usage exceeds capacity, iterate through each user ui exceeding capacity, where the number of such users is I, get the user's available time slots, time complexity O(1), postpone or reject the user, time complexity O(1); recalculate resource usage, time complexity O(1). So the time complexity of the inner loop is O(I). Therefore, the overall time complexity of the algorithm is: O(S) x O(M) x O(I) = O(SMI). That is, the time complexity of this algorithm is polynomial, and related to the number of time slots S, resource types M, and users exceeding capacity I. It can be denoted as O(SMI), and since the total duration is much greater than the number of users, I is far smaller than S.

\subsection{Evaluation Method}

There are 2 key points to note in the algorithm: 1).The user's per unit time value does not have to be limited to this method, as long as it reflects the user's urgency for the task. 2).For evaluating the algorithm, the evaluation algorithm's performance directly determines the advantage and disadvantage of the algorithm. In order to distinguish users with lower efficiency per unit time, make corresponding adjustments to their resources, so as to improve the overall resource utilization efficiency.

The resource value density per unit time is used as an indicator to compare the resource utilization efficiency between users. Therefore, algorithm \ref{alg_fun1} is designed. Specifically, this algorithm first initializes the recorded minimum resource value per unit time $v_{min}$ and the corresponding user ID $u_i$. Then it iterates through all users $i$ and resource types $m$, calculating the resource value per unit time $v_{it}^m$ for user $i$ using resource $m$ at time $t$, which is $\frac{u_{it}}{r_{im}}$. Here $u_{it}$ is the per unit time value of resource $m$, and $r_{im}$ is the amount of resource $m$ occupied by user $i$. If the calculated $v_{it}^m$ is lower than the recorded $v_{min}$, then $v_{min}$ and $u_i$ are updated. After the iteration ends, the recorded $u_i$ is the user with the lowest resource utilization efficiency. By comparing the resource value density per unit time, this algorithm intuitively reflects the resource utilization efficiency of different users on the system resources. Compared with the absolute resource occupancy, it can more effectively identify "occupying more resources but low value density" inefficient users.

\begin{algorithm}
	
	\caption{Time-Based Resource Utilization Evaluation Method (T-RUEM)} \label{alg_fun1}
	
	\textbf{Input}: Resource usage $r_i^m$ and resource value density per unit time $u_{it}$ at time $t$
	
	\textbf{Output}: User id $u_i$
	
	\begin{algorithmic}[1]
		
		\STATE Initialize $u_i = -1, v_{min} = +\infty$
		
		\FOR{$i \in I$ where I is the subset of users exceeding resources at current time}
		
		\FOR{$k \in M$}
		
		\STATE Calculate resource value density per unit time for user $i$ resource $k$ at time $t$: $v_{it}^m \gets \frac{u_{it}}{r_i^m}$
		
		\IF{$v_{it}^m < v_{min}$}
		
		\STATE Record minimum resource value density per unit time: $v_{min} \gets v_{it}^m$
		
		\STATE Record current user id: $u_i \gets i$
		
		\ENDIF
		
		\ENDFOR
		
		\ENDFOR
		
		\RETURN $u_i$
		
	\end{algorithmic}
	
\end{algorithm}

This algorithm contains two nested for loops, the outer loop iterates through all users exceeding resources I, and the inner loop iterates through all resource types M. Within the two for loops, there are mainly some simple arithmetic calculations, whose time complexity is O(1). Therefore, the overall time complexity of the algorithm is: O(I) x O(M) x O(1) = O(IM).
That is to say, when the number of users exceeding resources I or the number of resource types M is large, the running time of this algorithm will grow linearly with the increase of I or M.
But compared to directly calculating and comparing the absolute resource amounts of all users, this algorithm calculates the "resource value density per unit time", which actually avoids the calculation and comparison of absolute resource amounts. So it can be considered a more flexible design, thus reducing the amount of computation.And usually in practical situations, I and M are relatively small. 
Overall, the time complexity of this algorithm is polynomial level in the worst case, which is acceptable.

\textbf{Weighted Average Algorithm} Since the solving method of T-RUEM only considers the problem of one resource each time, without considering the occupancy of other resources, according to the proposed multi-resource allocation problem, the calculation method is redesigned and a weighted average algorithm based on unit resource value is proposed.
\begin{subequations}
	\begin{align}
		v'_{it} &= \frac{\sum_{m=1}^{M} v_{it}^m * r_i^m}{\sum_{m=1}^{M} r_i^m} \\
		&=   \frac{\sum_{m=1}^{M} \frac{u_{it}}{r_i^m} * r_i^m}{\sum_{m=1}^{M} r_i^m}	\\
		&= \frac{ u_{it}}{\sum_{m=1}^{M} r_i^m}
	\end{align}
\end{subequations}

Where $v'{it}$ - the weighted average value of all resources for user i at time t;
$v_{it}^m$ - the value of resource m to user i at time t, defined as the utility $u_{it}$ of user i divided by the amount of demand $r_i^m$ of resource m for user i. This can reflect the relationship between each user's utility and demand for resources.

\begin{algorithm}
	\caption{Time-Based Resource Weighted Average Evaluation Method (T-RWAEM)}
	
	\textbf{Input}: Resource usage $r_i^m$ and resource value density per unit time $u_{it}$ at time $t$
	
	\textbf{Output}: User id $u_i$
	
	\begin{algorithmic}[1]
		
		\STATE Initialize $u_i = -1, v'_{min} = +\infty$
		
		\FOR{$i \in I$ where I is the subset of users exceeding resources at current time}
		
		\STATE Initialize total weights $W \gets 0$
		
		\FOR{$m \in M$}
		
		\STATE $W \gets W + r_{i}^m$
		
		\ENDFOR
		
		\STATE Calculate weighted average value $v'_{it} \gets \frac{u_{it}}{W}$
		
		\IF{$v'_{it} < v'_{min}$}
		
		\STATE Record current minimum weighted average resource value density per unit time: $v'_{min} \gets v'_{it}$
		
		\STATE Record current user id: $u_i \gets i$
		
		\ENDIF
		
		\ENDFOR
		
		\RETURN $u_i$
		
	\end{algorithmic}
	
\end{algorithm}

Here is an analysis of the time complexity of the T-RWAEM algorithm in an academic writing style: The initialization operations have a constant time complexity of O(1). The outer loop iterates through the subset I of overloaded users, with a time complexity of O(|I|), where I is the number of overloaded users.
The inner loop iterates through the set M of resource types, with a time complexity of O(M), where M is the number of resource types.
Calculation of the total weights W also has a time complexity of O(M).
Computation of the weighted average value v' has a constant time complexity of O(1).
The if condition check and minimum value update operations have a time complexity of O(1).
Therefore, the overall time complexity of the algorithm can be expressed as:
$T(n) = O(1) + O(I)[O(M) + O(M) + O(1)] = O(IM)$
The algorithm has a time complexity of O(|I|*|M|), where |I| and |M| represent the number of overloaded users and resource types respectively.

The key operations are the two nested loops, so the time complexity depends on the scale of the overloaded user subset and resource types. We can consider this algorithm to have a relatively low time complexity.

Further analyzing the relationship between algorithm time complexity and I and M: When the number of users I is very large and the number of resources M owned by a single user is relatively small, the algorithm time complexity tends to O(I); When the number of resources M owned by a single user is very large and the number of users I is relatively small, the algorithm time complexity tends to O(M).
In summary, the algorithm's time complexity is linear with the number of users I and the number of resources M per user. It belongs to linear time complexity. Compared with polynomial time complexity, this linear time complexity algorithm has higher time efficiency.
The algorithm fully considers the user's resource usage amount, and introduces the resource amount as weights, making the identification results more convincing. The design of calculating the weighted average resource value density is reasonable, avoiding the problem of only relying on resource value density per unit time to identify users.
Also, its logic is simple and intuitive, easy to implement.

\subsection{Online Design}

The offline algorithm design has been completed in the previous sections. In order to enable online decision making, a soft acceptance protocol will be added, which is a delayed payment model. Users will first be accepted in the auction, and when the service cannot be satisfied, the user will be rejected. If all the user's tasks are finally completed, the settlement will be based on the final completion time.

Soft Accept is an online auction algorithm mechanism. Its core idea is to flexibly accept or reject user requests based on the current system resource status. Specifically, the soft acceptance algorithm allows temporarily accepting more service requests, even if resources ultimately cannot be fully satisfied, it can still respond to users first and reject the service when resources are insufficient. Compared with traditional offline batch allocation auctions, the soft acceptance algorithm has the following significant advantages:

\begin{itemize}
	
	\item First, soft acceptance can significantly improve resource utilization and avoid waste and idle resources. The algorithm can accept more requests until resources are exhausted, thereby maximizing the use of system resources.
	
	\item Second, soft acceptance can greatly reduce the user's waiting response time to meet the user's real-time service needs. Compared with offline batch processing, soft acceptance achieves "real-time" service.
	
	\item Third, the soft acceptance mechanism can flexibly handle dynamically changing user requests, exhibiting better scalability and elasticity, which is very meaningful for many online services.
	
	\item Fourth, soft acceptance implements pay-after-service, lowering the barrier for users, who can get the service first and then pay after the request is finally accepted.
	
	\item Fifth, the service provider can ensure its revenue by setting payment prices to guarantee profits in case of insufficient supply.
	
\end{itemize}

Finally, soft acceptance simplifies business processes without requiring users to prepay. Its core algorithm is basically consistent with the offline algorithm, but it requires users to comply with the soft acceptance protocol. The soft acceptance online auction mechanism combines the real-time, elasticity and pay-after-service features of auctions, and compared with offline auctions, it has advantages such as high utilization, short waiting time, handling dynamic requests, and enabling pay-after-service, making online auction algorithms more efficient and practical.

\section{Theoretical Analysis}\label{analysis}

\begin{definition}
	Let $S$ be the set of all feasible solutions, $P$ be the objective function of the resource allocation problem.
	Let $\Delta$ be the maximum number of moves of the adjustment algorithm A, Let A be the resource allocation adjustment algorithm in Algorithm \ref{resource_allocate_alg}. Here $\Delta$ can take non-negative integers.
	Let $s^*$ be the optimal solution, $s$ be any feasible solution.
	Define the maximum number of moves to the optimal solution as $\Delta^*$, when $s^*$ is reachable, $\Delta \leq \Delta^*$.
\end{definition}

\begin{lemma}
	
	For the resource allocation problem, there exists an optimal solution $s^* \in S$, for any feasible solution $s \in S$, $P(s^*) \geq P(s)$.
	
\end{lemma}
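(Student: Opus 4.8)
The plan is to establish this lemma as a straightforward existence-of-maximum result over a finite feasible region, rather than appealing to any property of the adjustment algorithm $A$ or the quantities $\Delta, \Delta^*$ introduced in the preceding definition. The key observation is that every decision variable in the formulation, namely $x_{ij}$ and $y_{it}$, is constrained to be binary by the integrality condition, while the index sets $U$, $B$, $S$, and $M$ are all finite. Consequently the feasible set $S$ is a subset of $\{0,1\}^N$ with $N = |U|\cdot B + |U|\cdot|S|$, so $S$ contains at most $2^N$ elements and is therefore finite.

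First I would verify that $S$ is nonempty by exhibiting the trivial feasible point in which $x_{ij}=0$ and $y_{it}=0$ for all indices, i.e. rejecting every bid and scheduling nothing. One checks directly that this point satisfies (\ref{limit_job})--(\ref{limit_resource}): the left-hand sides of (\ref{limit_job}), (\ref{limit_deadline}), and (\ref{limit_win_bid}) all vanish, and (\ref{limit_resource}) reduces to $0 \le c_m$, which holds since each capacity $c_m$ is nonnegative. Hence $S \neq \emptyset$.

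Next I would observe that the objective $P(s) = \sum_{i \in U}\sum_{j \in B} p_{ij} x_{ij}$ is a real-valued function bounded above on $S$, being a finite sum in which each term is at most $p_{ij}$. Because $P$ maps the finite nonempty set $S$ into $\mathbb{R}$, the image $P(S)$ is a finite nonempty subset of $\mathbb{R}$ and therefore contains a greatest element. Choosing $s^*$ to be any feasible solution attaining that greatest value yields $P(s^*) \ge P(s)$ for every $s \in S$, which is exactly the claim; this $s^*$ is the optimal solution referenced in the earlier definition.

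I do not expect a genuine obstacle here: the only points requiring care are confirming the nonemptiness of $S$ and the finiteness of the index sets, both of which follow immediately from the model. The one subtlety I would flag is to justify the maximum-attainment step as a consequence of finiteness (a finite nonempty set of reals has a maximum) rather than invoking a continuous extremal-value theorem, since $S$ is discrete and $P$ is not being treated as a continuous function on a compact domain.
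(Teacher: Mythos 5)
Your proof is correct, and it takes a genuinely different route from the paper's. You treat the lemma as what its statement literally is --- a pure existence claim --- and dispatch it by finiteness: the decision variables $x_{ij}, y_{it}$ are binary over finite index sets, so the feasible set is a nonempty (the all-zero point is feasible) finite subset of $\{0,1\}^N$, and a real-valued objective on a finite nonempty set attains its maximum. This is airtight and self-contained. The paper instead proves the lemma \emph{through} the adjustment algorithm: it constructs the algorithm $A$ of its Algorithm~2, argues that $A$'s output $s'$ has the minimum ``number of moves'' $\Delta$, asserts that minimality of moves implies $P(s') \geq P(s)$ for all feasible $s$, and concludes optimality of $s'$ (and hence existence of an optimal solution). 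What the paper's route attempts to buy is a stronger statement --- not merely that an optimum exists, but that the proposed algorithm reaches it --- which is then leaned on by the subsequent lemmas about moves and the competitive ratio. However, the paper's key step (minimum number of moves implies maximum objective value) is asserted rather than derived, so as a proof of the stated lemma yours is the more rigorous one, while the paper's is better read as a sketch of the algorithm's optimality claim that the later lemmas try to formalize. One consequence worth noting: because your argument never mentions $A$, $\Delta$, or $\Delta^*$, it cannot by itself support those downstream lemmas; if you wanted your proof to slot into the paper's chain of reasoning, you would still need a separate argument connecting the algorithm's move-minimization to objective optimality.
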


\begin{proof}
	
	Step 1: Construct a feasible resource allocation solution algorithm A, for each time slot $t$, for resource type $k$ exceeding capacity:
	
	Let excess user $u$ postpone allocation to $u$'s next earliest available time slot $t'$.
	
	Step 2: Prove algorithm A minimizes the number of moves $\Delta$.
	
	By the construction of algorithm A, within the search space of solutions, the number of moves of this solution is minimized, with upper bound $\Delta$.
	
	Step 3: Let $s'$ be the feasible solution generated by algorithm A. Because $s'$ has the minimum number of moves, $P(s') \geq P(s), \forall s \in S$.
	
	Step 4: From Step 1 and 3 we have: $P(s^*) \geq P(s'), \forall s' \in S$.
	
	Step 5: When $\Delta \leq \Delta^*$, $\Delta$ represents the number of moves of algorithm A is minimized, falling within the solution space with minimum number of moves.
	
	According to algorithm A, at this time $s'$ has the minimum number of moves, so $s'$ is the optimal solution.
	
	In summary, the lemma is proved.
	
\end{proof}

\begin{proof}
	Now we prove: the resource allocation problem satisfies the optimal substructure property.
	
	Let the optimal solution of the resource allocation problem be $S^*$, which contains m subproblems $P_1,P_2,...,P_m$.
	
	For any subproblem $P_i$, let its optimal solution be $S_i^*$. Construct a new solution $S'$:
	
	$$S' = S_1^* \cup S_2^* \cup ... \cup S_m^*$$
	
	That is, $S'$ contains the optimal solutions of all subproblems.
	
	We prove: $S'$ is also the optimal solution to the original problem.
	
	Assume the opposite, that $S'$ is not the optimal solution to the original problem. That is, there exists another solution $S''$, such that:
	
	$$P(S'') > P(S')$$
	
	Where $P()$ is the objective function of the resource allocation problem.
	
	Decompose $S''$ into solutions for subproblems $P_1,P_2,...,P_m$, denoted as $S_1'',S_2'',...,S_m''$.
	
	Then according to the properties of the resource allocation problem, there must exist some $i$, such that:
	
	$$P(S_i'') > P(S_i^*)$$
	
	This contradicts that $S_i^*$ is the optimal solution to subproblem $P_i$.
	
	Therefore the original assumption does not hold. That is, $S'$ is the optimal solution to the original problem.
	
	This proves that the optimal solution to the resource allocation problem contains the optimal solutions to all its subproblems.
	
	Therefore, the resource allocation problem satisfies the optimal substructure property.
	
\end{proof}

\begin{lemma}
	
	In the feasible solution space $S$, if there exists a feasible solution $s'$ with the minimum number of moves, then $s'$ is the optimal solution.
	
\end{lemma}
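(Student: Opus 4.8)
The plan is to tie the combinatorial quantity ``number of moves'' to the objective $P$ and then argue by contradiction, in the same style as the optimal-substructure argument given above. First I would make precise the cost of a single move of algorithm A: every move is either a postponement of an excess user's task to that user's next available slot or, when no slot exists before the deadline, a rejection that deletes a winning bid. I would then show that under the per-slot valuation induced by the unit-time value $u_{it}$, each move is value non-increasing, since a postponement either preserves the winning bid (the task still finishes within some $e_{ij}$) or downgrades it to a lower-priced alternative, while a rejection removes $p_{ij}$ outright. This establishes that $P$ is a non-increasing function of the moves performed by A along any trajectory.

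Next I would assume for contradiction that $s'$ has the minimum number of moves in $S$ yet is not optimal, so by the preceding lemma there is an optimal $s^*$ with $P(s^*) > P(s')$. Using the monotone relationship above, a strictly larger objective can be realized only by retaining strictly more, or strictly higher-valued, bids, which in turn requires strictly fewer rejections and postponements than those recorded in $s'$. I would formalize this by bounding the move-count of $s^*$ and showing it is strictly smaller than that of $s'$, contradicting the minimality of $s'$. Hence no such $s^*$ exists, $P(s') = P(s^*)$, and $s'$ is optimal.

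The hard part will be making the correspondence between move-count and objective watertight, because in the multi-resource setting a single postponement can free capacity that admits several other high-value tasks, so fewer total moves need not imply higher welfare term by term. To control this I would fix the greedy order in which algorithm A evaluates users, namely lowest unit-time resource value density first, per Algorithm \ref{alg_fun1}, and argue that along this specific ordering each move is locally value minimal, so the accumulated welfare loss equals the sum over moves; this reduces the global monotonicity claim to a per-move accounting that can simply be summed. The residual subtlety is verifying that postponing the lowest-density user $v_{it}^m$ indeed minimizes the marginal welfare loss at each step, which I would settle with an exchange argument showing that swapping any other user's postponement for the lowest-density one cannot increase retained welfare, precisely the role the density $v_{it}^m$ plays in Algorithm \ref{resource_allocate_alg}.
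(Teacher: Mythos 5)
Your contradiction skeleton is actually cleaner than the paper's own proof (which invokes the optimal-substructure property, introduces the number of moves needed to transform $s'$ into $s^*$, and declares a contradiction that does not logically follow), but your argument breaks at exactly the load-bearing step, and it is the step you yourself flag as ``the hard part.'' You need the implication: $P(s^*) > P(s')$ forces $s^*$ to use strictly fewer moves than $s'$. Per-move value monotonicity (each postponement or rejection weakly decreases retained welfare) gives only the forward direction --- more moves along a single trajectory cannot increase welfare --- not this inverse. Welfare is determined not by \emph{how many} moves are made but by \emph{which users} are moved: two feasible solutions can each perform exactly one rejection, one displacing a bid worth $p_{1j}=10$ and the other a bid worth $p_{2j}=1$, so they have identical (and minimal) move counts yet different objective values. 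A strictly better $s^*$ therefore need not have fewer moves than $s'$, the contradiction with minimality never materializes, and indeed this example shows the lemma as stated is false in general: not every minimum-move feasible solution is optimal.

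Your proposed patch --- fixing the greedy evaluation order of Algorithm~\ref{alg_fun1} (lowest $v_{it}^m$ first) and running an exchange argument to show each of A's moves is locally welfare-minimal --- does not close this gap; it changes the statement being proved. Such an exchange argument would establish a local-optimality property of the specific trajectory of Algorithm~\ref{resource_allocate_alg}, whereas the lemma quantifies over \emph{all} feasible solutions in $S$ with minimum move count, most of which are not generated by A and respect no density ordering. (The exchange step is also delicate on its own terms in the multi-resource setting: the user minimizing density on resource $m$ need not minimize it on resource $m'$, and a single postponement can cascade into overloads in later slots.) To prove the lemma as written you would need a bound tying move count to welfare uniformly over $S$, which the counterexample above rules out absent extra assumptions, e.g., that all competing bids have equal per-slot value. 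The honest conclusion is that the statement needs to be weakened --- for instance to ``some minimum-move solution is optimal,'' or to a claim about A's output specifically --- before either your argument or the paper's can be made rigorous.
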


\begin{proof}
	
	Assume in the feasible solution space $S$, there exists a feasible solution $s'$ with the minimum number of moves, denoted as $\Delta'_{\min}$.
	
	Assume $s'$ is not the optimal solution, that is, there exists another feasible solution $s^*$, with a more optimal objective function value, i.e. $P(s^*) > P(s')$.
	
	Since the resource allocation problem satisfies the optimal substructure property, transferring from any feasible solution to the optimal solution must require a certain number of moves. Let the number of moves from $s'$ to $s^*$ be $\Delta''$.
	
	Because $s'$ already has the minimum number of moves, we have: $\Delta'' > \Delta'_{\min}$.
	
	This contradicts the assumption that $s'$ has the minimum number of moves.
	
	By proof by contradiction, the original assumption "s' is not the optimal solution" does not hold.
	
	Therefore, if $s'$ has the minimum number of moves, it must be the optimal solution.
	
\end{proof}

\begin{definition}
	\label{define:competitive_ratio}
	
	Let the optimal solution be $s^*$, any feasible solution be $s'$. Define the competitive ratio of $s^*$ and $s'$ as:
	$\alpha(s',s^*) = \frac{P(s')}{P(s^*)}$
	Where, $\alpha(s',s^*)$ represents the competitive advantage of $s'$ relative to the optimal solution $s^*$.
	
\end{definition}

\begin{lemma}
	$\alpha(s',s^*)$ is monotonically decreasing with respect to the number of moves $\Delta$.
\end{lemma}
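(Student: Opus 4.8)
The plan is to fix the optimal solution $s^*$ so that $P(s^*)$ is a positive constant, whereupon the ratio $\alpha(s',s^*) = P(s')/P(s^*)$ inherits its monotonicity entirely from its numerator. I would state this reduction explicitly first: since dividing by a fixed positive quantity preserves the direction of monotonicity, the lemma is equivalent to the single claim that the achieved objective value $P(s')$ is non-increasing in the number of moves $\Delta$ performed by Algorithm \ref{resource_allocate_alg}.

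Next I would analyze the effect of one move on $P(s')$. By construction, a move is triggered only when some resource type exceeds its capacity $c_m$, and it postpones (or, when no later slot exists, rejects) the user selected by the evaluation routine of Algorithm \ref{alg_fun1}, namely the one with the smallest unit-time resource value density $v_{it}^m$. A postponement that still lets the task finish within its deadline leaves every $x_{ij}$ unchanged and hence leaves $P$ unchanged, whereas a move that forces a rejection flips some $x_{ij}$ from $1$ to $0$ and decreases $P$ by exactly $p_{ij}$. In either case the objective cannot increase, so along the trajectory of the algorithm each additional move yields $P(s'_{\Delta+1}) \le P(s'_{\Delta})$.

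I would then tie this per-move inequality to the earlier results. By the minimum-moves lemma, the solution with the minimum number of moves $\Delta'_{\min}$ coincides with the optimum, where $P(s') = P(s^*)$ and $\alpha$ attains its maximal value $1$; every move beyond this minimum can only displace tasks and therefore weakly decreases $P(s')$, pushing $\alpha$ downward. Telescoping the inequality $P(s'_{\Delta+1}) \le P(s'_{\Delta})$ across the sequence of moves then establishes that $\alpha(s',s^*)$ is monotonically decreasing in $\Delta$.

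The hard part will be justifying that each move is genuinely non-increasing in welfare rather than merely rearranging it. Concretely, I must rule out the possibility that postponing a low-density user frees enough of resource $m$ to admit some higher-value task that would otherwise have been rejected, which could in principle raise $P$ at the cost of an extra move. The key to closing this gap is the greedy invariant of the evaluation routine: because it always evicts the task with the smallest $v_{it}^m$, any displaced task has value density no greater than those retained, so the capacity it releases can only be reallocated to work of no-greater value per unit resource; combined with the optimal-substructure property established earlier, this local greedy choice cannot yield a globally superior schedule requiring more moves. I would make this invariant precise and argue it by the same exchange-and-contradiction style used in the preceding minimum-moves lemma.
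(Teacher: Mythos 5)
Your proof is correct at (indeed, above) the paper's level of rigor, but it takes a genuinely different route. The paper's own proof is a three-step appeal to the earlier lemmas: fix $s^*$, take two feasible solutions with $\Delta_1 < \Delta_2$, assert $P(s'_1) \geq P(s'_2)$ as a direct consequence of the minimum-moves lemma, and divide by $P(s^*)$. That middle assertion is in fact a non-sequitur: the minimum-moves lemma only compares the minimum-moves solution against all others, and says nothing about how two arbitrary solutions with different move counts are ordered. Your telescoping argument supplies exactly this missing pairwise comparison: by accounting for what a single move of Algorithm \ref{resource_allocate_alg} can do to the objective (postponement: no change; rejection: a drop of $p_{ij}$), you obtain $P(s'_{\Delta+1}) \leq P(s'_{\Delta})$ along the algorithm's trajectory and hence the ordering that the paper merely asserts. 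What the paper's route buys is brevity; what yours buys is an actual mechanism driving the monotonicity, at the price of restricting the claim to solutions generated along the algorithm's run --- which is arguably the only reading under which ``the number of moves of a solution'' is well defined at all. Two refinements to your write-up: first, under the XOR multi-deadline bids a postponement need not leave every $x_{ij}$ unchanged --- it can push completion past $e_{ij}$ and downgrade the winning bid to a later deadline $j'$ with $p_{ij'} \leq p_{ij}$ --- but this only reinforces your per-move inequality rather than breaking it. Second, the ``freed capacity admits a higher-value task'' scenario you labor to exclude cannot actually occur in this framework: moves are triggered only when capacity is already exceeded, and every user's task is allocated upon arrival regardless of earlier moves, so a move can only preserve or shrink the set of satisfied bids, never enlarge it. The greedy-invariant machinery you sketch for that purpose is therefore unnecessary for this lemma; it belongs to the earlier optimality claims, not here.
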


\begin{proof}
	
	(1) According to Lemma 1, in the feasible solution space, the solution with the minimum number of moves $\Delta$ is the optimal solution.
	
	(2) Let any two feasible solutions be $s'_1,s'_2$, with number of moves $\Delta_1,\Delta_2$ respectively, and $\Delta_1 < \Delta_2$.
	
	(3) From (1), we know $\Delta_1 < \Delta_2$, then $P(s'_1) \geq P(s'_2)$.
	
	(4) Substitute $s'_1,s'_2$ into the definition of $\alpha$:
	
	$\alpha(s'_1,s^*) = \frac{P(s'_1)}{P(s^*)} \geq \frac{P(s'_2)}{P(s^*)} = \alpha(s'_2,s^*)$
	
	(5) Therefore, when $\Delta_1 < \Delta_2$, we have $\alpha(s'_1,s^*) \geq \alpha(s'_2,s^*)$.
	
	(6) In summary, $\alpha(s',s^*)$ is monotonically decreasing with respect to $\Delta$.
	
\end{proof}

This section has proved the following three lemmas:

(1) In the feasible solution space, the solution with the minimum number of moves is the optimal solution. This proves that algorithm A can obtain the optimal solution by minimizing the number of moves.

(2) The solution with the minimum number of moves must be the optimal solution. This further strengthens the conclusion of Lemma 1.

(3) The defined competitive ratio decreases monotonically as the number of moves decreases. This proves that minimizing the number of moves can continuously improve the solution quality.

In summary, these three lemmas prove that algorithm A can produce the solution with the minimum number of moves, i.e. the optimal solution, by greedily minimizing the number of moves. The lemmas provide theoretical guarantees that algorithm A produces the optimal solution. This provides a simple and efficient optimization algorithm for the resource allocation problem.

\section{Performance Evaluation}\label{experience}

In order to evaluate the performance of the proposed algorithm, we designed comprehensive experiments under different experimental settings. On the basis of using Google Cluster Data\cite{google2023cluster}, we designed each job as a bidding behavior, added some simulated data, which contains information about each job, including start time. We assume each user's task is [1, 12] time slots, and each time slot is 5 minutes. Specifically, we generated a comprehensive dataset with the number of users, resource types. We also introduced baseline methods for comparison, including random algorithm, greedy algorithm.

In the comparative experiments of this paper, the random method randomly assigns tasks within the latest time required by the user after the user arrives, and then selects the corresponding bidding based on the completion time of the assigned tasks. The greedy algorithm finds the time node with the highest bid within the latest time required by the user after the user arrives, tries to meet the user's requirements as much as possible, and satisfies the user's requirements in a secondary manner if there are not enough resources. If the resources or time cannot meet the requirements, the task will be canceled. In the competitive bidding experiments of this paper, as mentioned in \ref{define:competitive_ratio}, it is the ratio of the current target welfare to the maximum welfare.

Experiment 1 aims to discuss the competitive ratio of the algorithm. It designs different resource allocation algorithms under different number of users, resource types and other conditions to compare the competitive ratio. The experiment compares the random method, greedy method, T-RUEM and T-RWAEM, as well as the optimal solution obtained by the solver gurobi. Since the task time slots and bidding prices may vary, in order to enhance persuasiveness, the experiment is performed 100 times to take the average value. Considering the complexity of this problem and the time limit that can be solved, the number of users is set between 10 and 50, the resource types are fixed to 2 types, including CPU and RAM, and the resource limit is normalized to 1. The experimental results are shown in Figure \ref{fig:exp1}.

\begin{figure}[htb]
	\centering
	\includegraphics[width=4.5in]{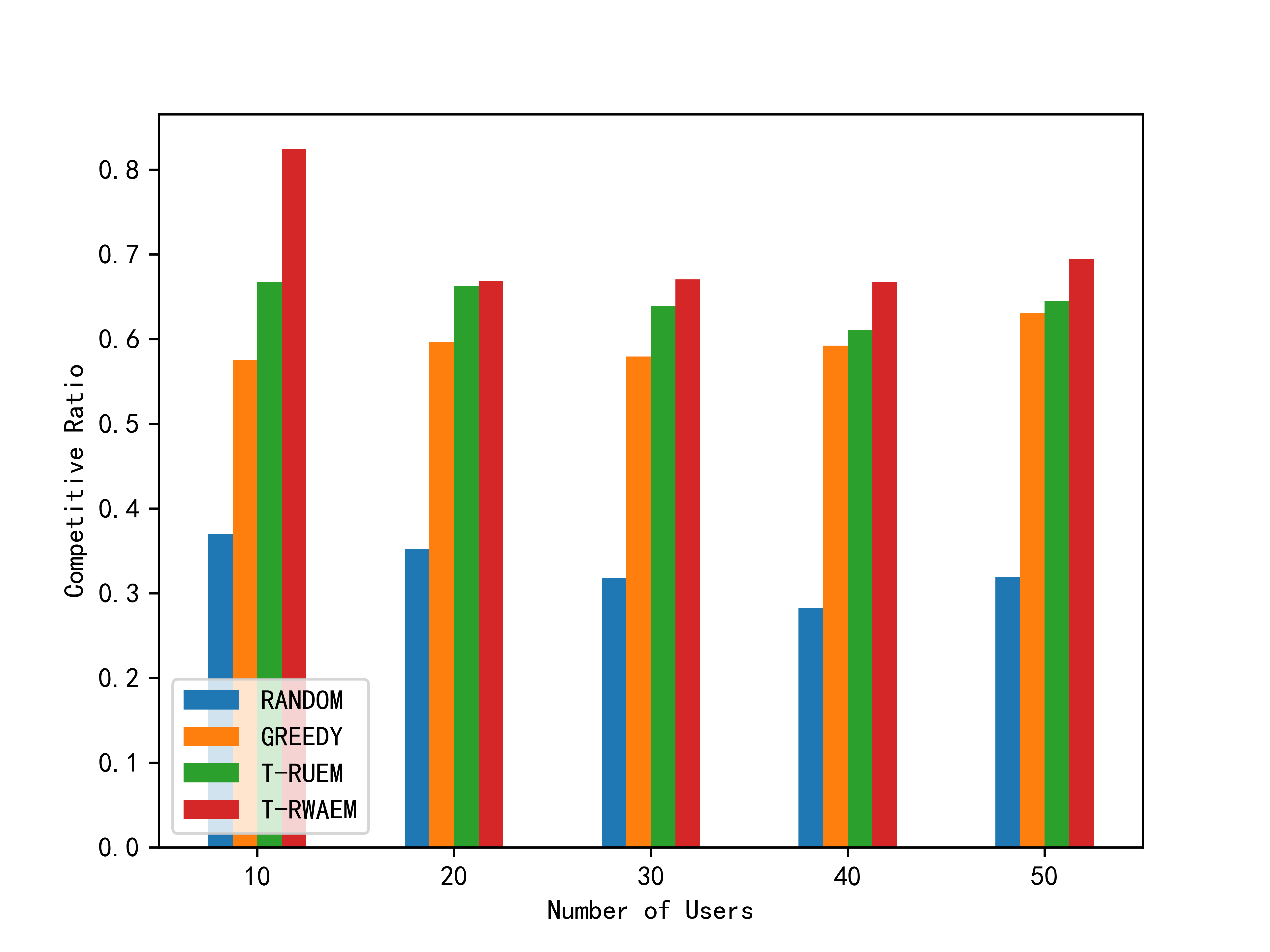}
	\caption{Competitive ratio of above algorithms with  different number of users}
	\label{fig:exp1}	
\end{figure}

As the number of users increases, the overall performance of the random method shows a downward trend, which is consistent with the inherent characteristics of the random method. The performance of the greedy method is relatively stable and overall better than the random method. The performance of both the T-RUEM method and the T-RWAEM method is better than the random method and the greedy method, and the T-RWAEM method is significantly better than the T-RUEM method. When the number of users increases from 10 to 50, they can maintain the lead and the average competitive ratio of the T-RWAEM method is 0.71, which shows that the T-RWAEM method can provide better performance in user groups of different scales. In summary, this group of experiments verifies the validity of the proposed T-RUEM method and T-RWAEM method. Compared with the traditional random method and greedy method, the proposed framework can achieve a higher approximation ratio.

Experiment 2 aims to discuss the comparison of social welfare of the algorithm under the condition of a large number of people. It designs different resource allocation algorithms under different number of users, resource types and other conditions to compare social welfare. The experiment compares the random method, greedy method, T-RUEM and T-RWAEM. The simulated settings are the same as experiment 1, except that the number of users is 100-200, the number of bids is 3, 6, 9, respectively, the resource types are still fixed at 2 types, including CPU and RAM, and the resource limit is normalized to 1. The results are shown in Figure \ref{fig:exp2-total}. Analyzing the experimental results as a whole, it is not difficult to find that T-RUEM and T-RWAEM still have higher social welfare than the baseline. In order to further analyze the impact of different bids, the results of the proposed algorithms are displayed respectively, as shown in Figures \ref{fig:exp2-T-RUEM} and \ref{fig:exp2-T-RWAEM}, which show the performance comparison of algorithms T-RUEM and T-RWAEM when the number of bids is 3, 6, 9, displaying the impact of multiple bidding prices on social welfare under different number of users. It shows that the two algorithms have similar adaptability to changes in the number of bids, which verifies the effectiveness of the proposed algorithms in large-scale scenarios and shows their adaptability to changes in the number of users and bids. This provides support for the scalability of the algorithms in practical applications.
\begin{figure}[!htb]
	\centering
	\includegraphics[width=4.5in]{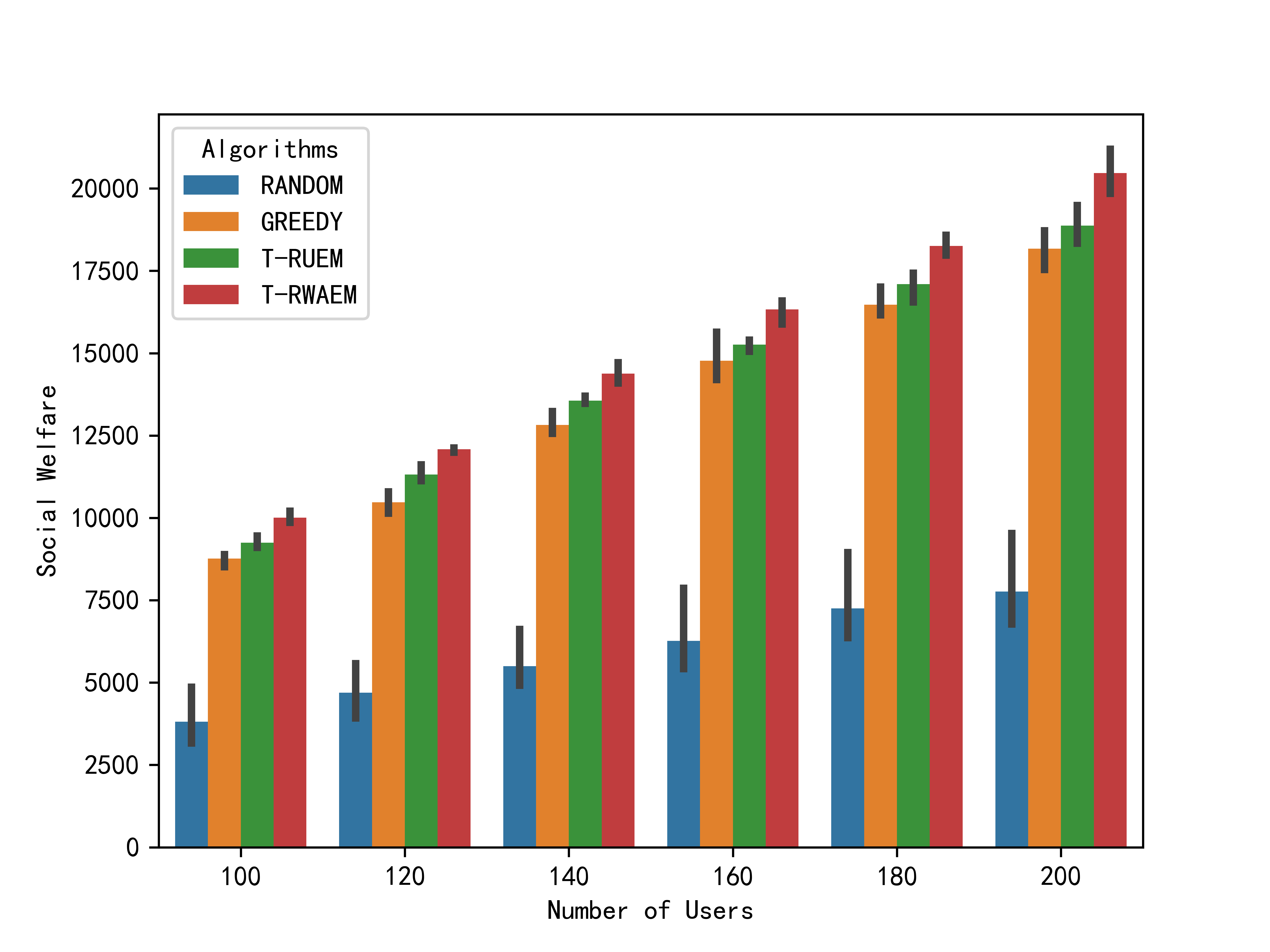}
	\caption{Comparison of social welfare when the number of different bids}
	\label{fig:exp2-total}
\end{figure}

\begin{figure}[!htb]
	\centering
	\includegraphics[width=4.5in]{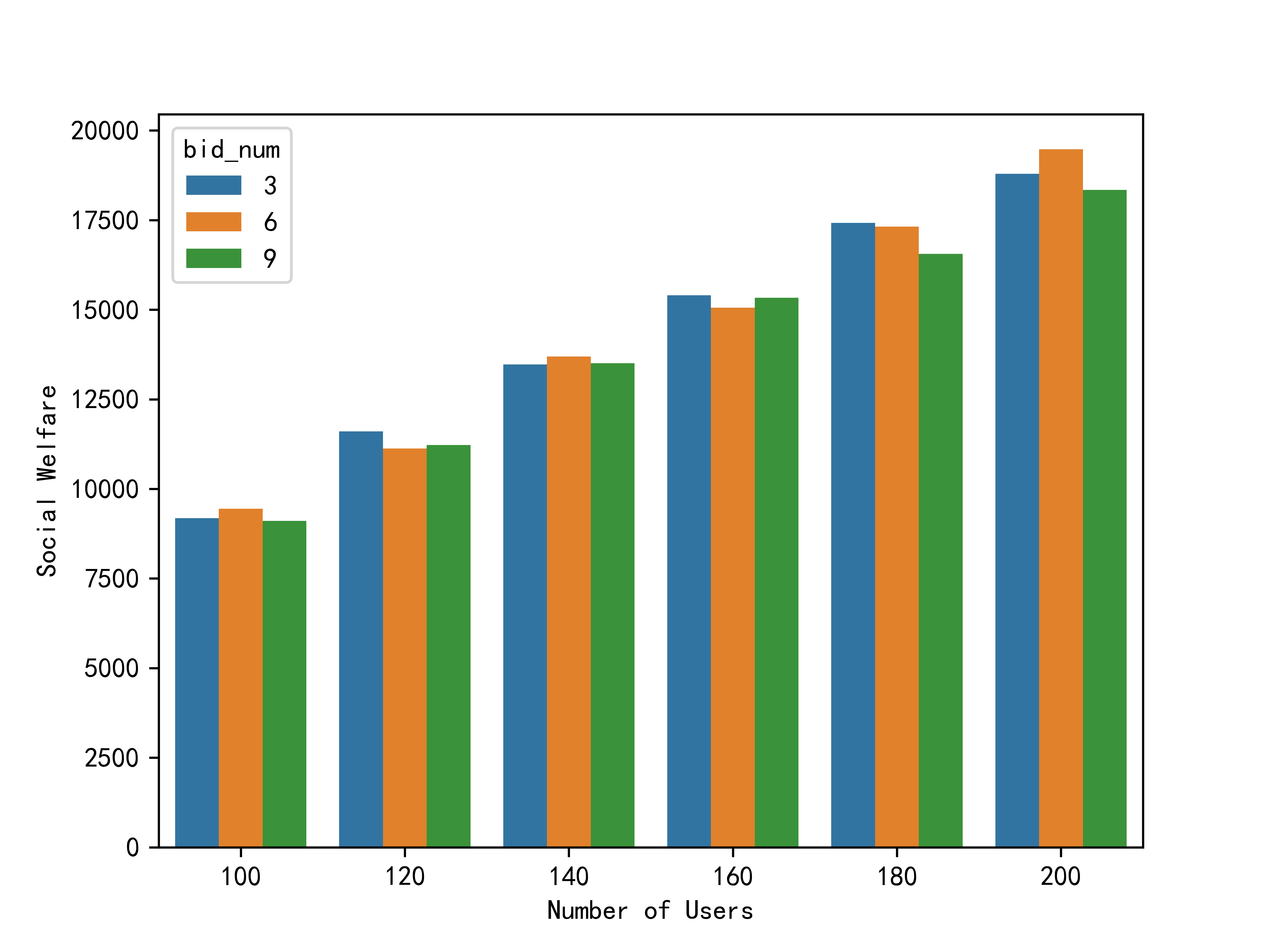}
	\caption{Comparison of social welfare of algorithm T-RUEM when the number of different bids}
	\label{fig:exp2-T-RUEM}
\end{figure}

\begin{figure}[!htb]
	\centering
	\includegraphics[width=4.5in]{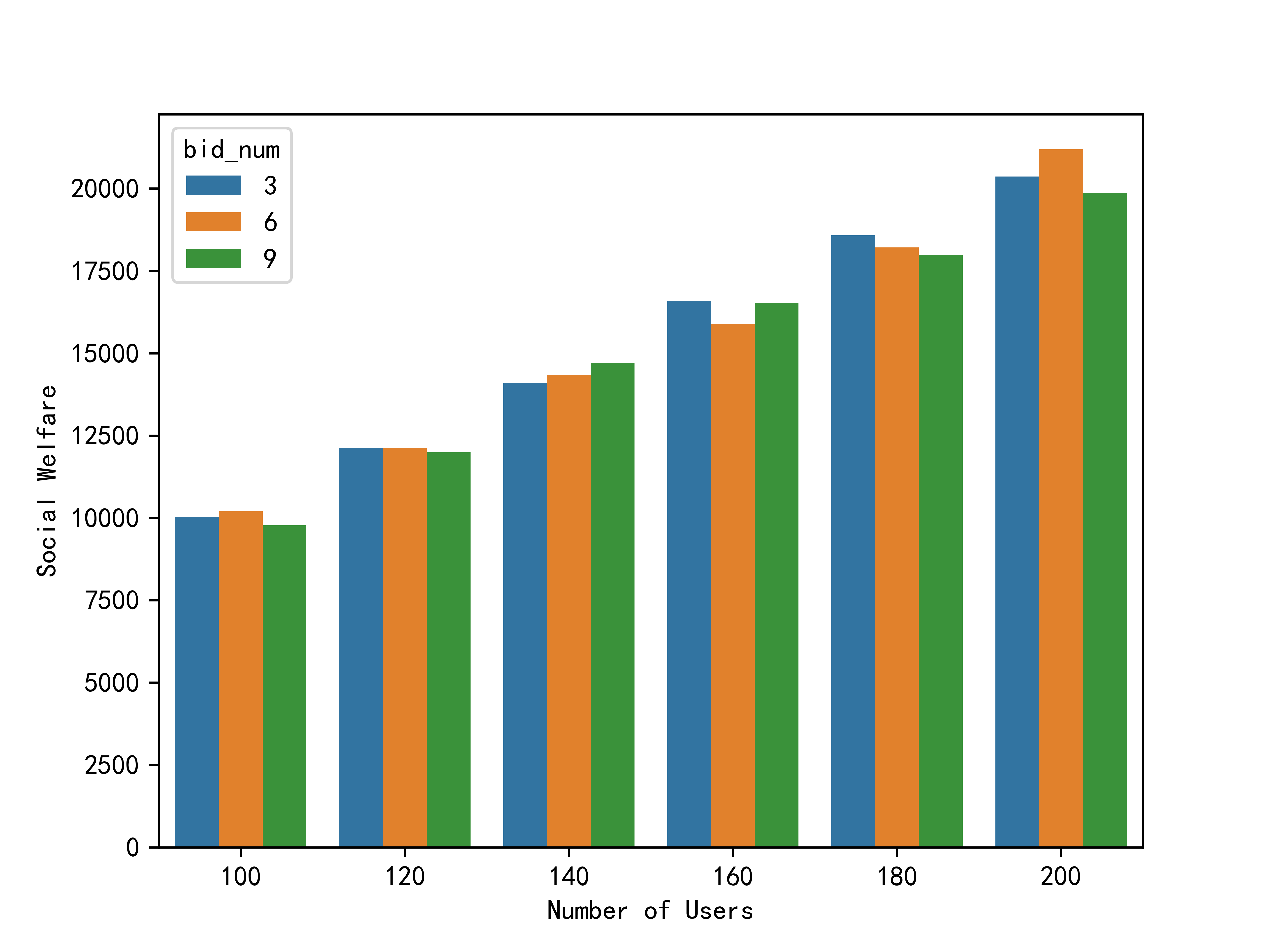}
	\caption{Social welfare comparison of algorithm T-RWAEM when the number of bids different bids}
	\label{fig:exp2-T-RWAEM}
\end{figure}

Experiment 3 aims to further analyze the situation under multiple bids. The purpose is to view the competitive ratio under the condition of multiple bids. Considering the completion within the valid time, the number of users is still set between 10 and 50, the resource types are fixed at 2 types, and the number of bids is 3, 6, 9 this time. The commercial solver gurobi is used to solve it. The experimental results are shown in Figure \ref{fig:exp3-compare2alg}. The experimental results show that the average competitive ratio of algorithm T-RWAEM is higher than that of algorithm T-RUEM. As the number of bids increases, the competitive ratio decreases slightly. This is because the increase in the number of bids indicates an increase in user requirements. The probability that the task can fall within the optimal time slot is relatively reduced compared to before, so the competitive ratio decreases.

\begin{figure}[htb]
	\centering
	\includegraphics[width=4.5in]{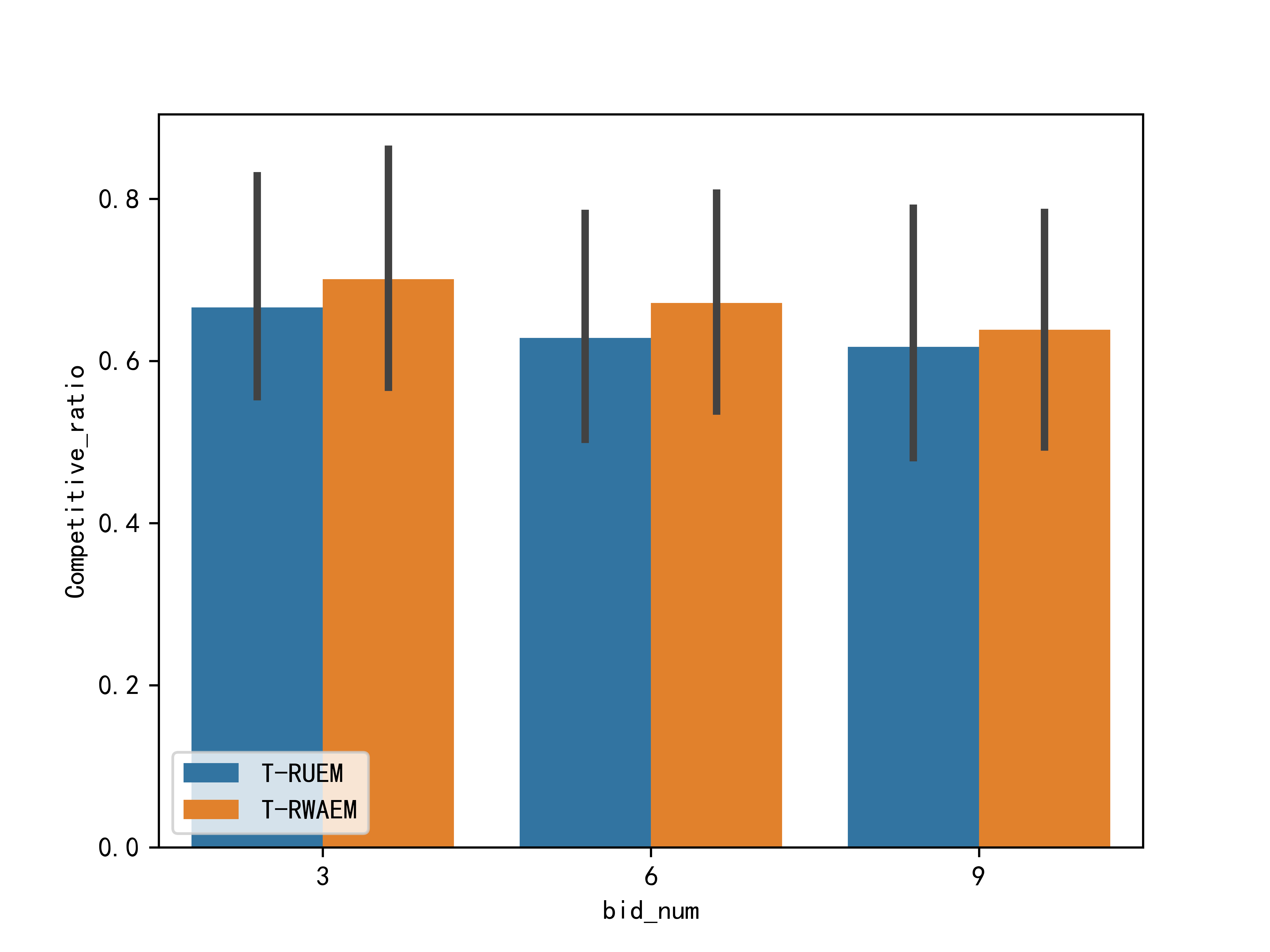}
	\caption{Competitive ratio between algorithm T-RUEM and algorithm T-RWAEM at different bid numbers}
	\label{fig:exp3-compare2alg}
\end{figure}

To discuss the adaptability of the two algorithms under multiple resources, Experiment 4 is designed. The experimental settings are: number of users is 100-200, number of bids is 3, number of resources is simulated data, also normalized to 1.
The experimental results are shown in Figure \ref{fig:exp4-total}, indicating that the proposed algorithms can still have good performance under multiple resources. Compared with the greedy benchmark algorithm, they both improve as the number of resource types increases. In addition, due to the increase in resource types, algorithm T-RUEM is more likely to reject tasks that may need to be kept for completion due to a single factor, indicating that algorithm T-RWAEM has a better adaptability. It can also be considered that T-RUEM is a special case of algorithm T-RWAEM under a single resource.

\begin{figure}[htb]
	\centering
	\includegraphics[width=4.5in]{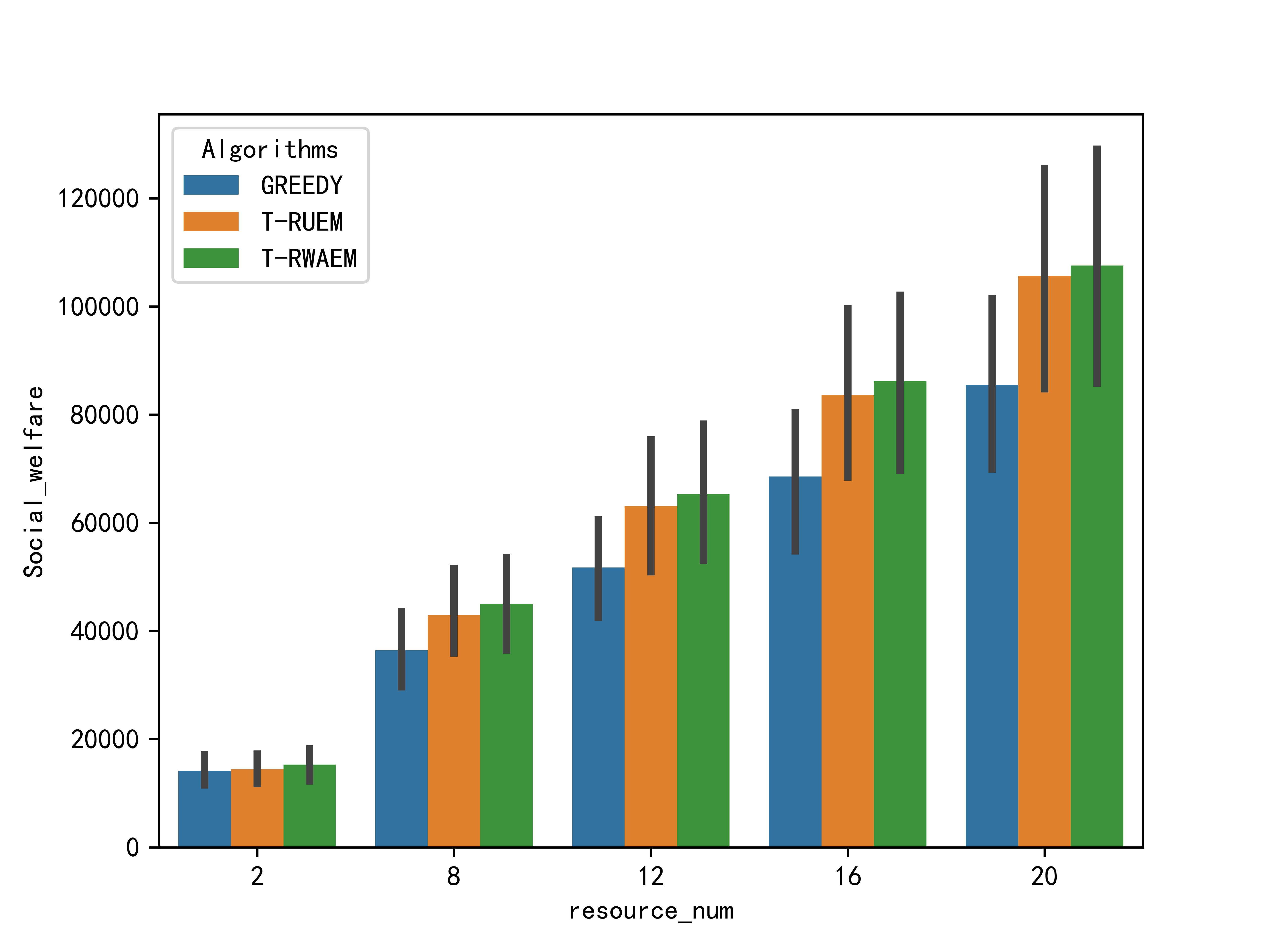}
	\caption{Total social welfare under varying number of resource types}
	\label{fig:exp4-total}
\end{figure}

\section{conclusion}\label{conclusion}

In this paper, we have presented a cloud auction framework with soft deadlines to address the online resource allocation problem. The main contributions are as follows:
1.We propose a solution using integer linear programming (ILP) modeling.
We design an auction time scheduling framework with three key modules - Allocation, Evaluation, and Action. This framework provides a systematic approach to dynamically allocate resources in a cyclic closed-loop manner.
2.We develop several evaluation methods, including time-based single resource utilization evaluation and weighted average evaluation, to assess the resource usage efficiency. These methods offer alternative optimization metrics for the scheduling framework.
We introduce a soft acceptance protocol that enables flexible online resource allocation. By temporarily overbooking but ensuring final feasibility, it improves resource utilization and user experience.
3.We analyze the time complexity of the proposed algorithms and prove them to be polynomial time, demonstrating efficiency for practical applications. The modular design also makes the framework easily extensible to incorporate more advanced algorithms.

In summary, this paper presents a structured cloud auction framework integrating optimization modeling, algorithm design, and online protocols. It provides useful insights and techniques for building practical cloud resource management systems. As future work, more complex models considering stochastic arrivals and multi-dimensional resource constraints could be investigated. The performance of the algorithms could also be evaluated on real-world cloud workloads. There are ample opportunities for further enhancements to the system's robustness, efficiency, and fairness.

\bibliographystyle{unsrt}
\bibliography{ref}

\end{document}